\newtheorem{theorem}{Theorem}
\newtheorem{lemma}[theorem]{Lemma}
\newtheorem{corollary}[theorem]{Corollary}
\def\@endtheorem{\endtrivlist}
\newcounter{brule}
\newenvironment{brule}{\refstepcounter{brule}\par\smallskip\noindent
\textbf{(B\arabic{brule})}\quad}{}
\newcommand{\currentrule}{B\arabic{brule}}
\newcommand{\Nnext}[1]{N_{\mathrm{next}}(#1)}
\newcommand{\Nsame}[1]{N_{\mathrm{same}}(#1)}
\newcommand{\Nprev}[1]{N_{\mathrm{prev}}(#1)}
\begin{document}

\title{Cluster deletion revisited}
\author{Dekel Tsur%
\thanks{Ben-Gurion University of the Negev.
Email: \texttt{dekelts@cs.bgu.ac.il}}}
\date{}
\maketitle

\begin{abstract}
In the \textsc{Cluster Deletion} problem the input is
a graph $G$ and an integer $k$, and the goal is to decide whether there
is a set of at most $k$ edges whose removal from $G$ results a graph in
which every connected component is a clique.
In this paper we give an algorithm for \textsc{Cluster Deletion} whose running
time is $O^*(1.404^k)$.
\end{abstract}

\paragraph{Keywords} graph algorithms, parameterized complexity,
branching algorithms.

\section{Introduction}
A graph $G$ is called a \emph{cluster graph} if every connected component of
$G$ is a clique.
In the \textsc{Cluster Deletion} problem the input is
a graph $G$ and an integer $k$, and the goal is to decide whether there
is a set of at most $k$ edges whose removal from $G$ results a clique
graph.

A graph $G$ is a cluster graph if and only if there is no induced $P_3$ in $G$,
where $P_3$ is a path on 3 vertices.
Therefore, there is a simple $O^*(2^k)$-time branching algorithm for
\textsc{Cluster Deletion}: If the current graph is not a cluster graph, find an
induced $P_3$ in the graph and create two new instances by removing each
edge of the path.
Faster algorithms for \textsc{Cluster Deletion} were given
in~\cite{gramm2005graph,gramm2004automated,damaschke2009bounded,bocker2011even}.
First, an $O^*(1.77^k)$-time algorithm was given in~\cite{gramm2005graph}.
An $O^*(1.53^k)$-time algorithm was given in~\cite{gramm2005graph}, and
an $O^*(1.47^k)$-time algorithm was given in~\cite{damaschke2009bounded}.
Finally, B{\"o}cker and Damaschke~\cite{bocker2011even} gave an algorithm
with a claimed running time of $O^*(1.415^k)$.

In this paper we show that there is an error in the analysis of the algorithm
of B{\"o}cker and Damaschke. 
We give a corrected analysis that shows that the running time of the algorithm
is $O^*(1.415^k)$ as claimed in~\cite{bocker2011even}.
Additionally, we give an algorithm for \textsc{Cluster Deletion} whose
running time is $O^*(1.404^k)$.
% Our algorithm is based on the algorithm of B{\"o}cker and Damaschke.

\section{Preliminaries}
For set of vertices $S$ in a graph $G$, $G[S]$ is the subgraph
of $G$ induced by $S$ (namely, $G[S]=(S,E\cap (S\times S))$).
%We also define $G-S$ to be the graph obtained from $G$ by deleting the
%vertices in $S$ and the edges incident on these vertices
%(in other words, $G-S = G[V\setminus S]$).
For a set of edges $F$, $G-F$ is the graph obtained from $G$ by deleting
the edges of $F$.
%For a path $P$ in $G$, $V(P)$ is the set of the vertices of $P$.
For a vertex $v$, $N(v)$ is the set of vertices that are adjacent to $v$.

Let $P_3$ denote a path on 3 vertices, and $C_4$ denote a chordless cycle on 4
vertices.

%Let $\alpha(G)$ be the minimum size of a set of edges $S$ such that $G-S$ is
%a cluster graph.

A graph $G$ is an \emph{$\alpha$-almost clique} if there is a set
$X \subseteq V(G)$ of size at most $\alpha$ such that $G[V(G)\setminus X]$
is a clique.

\begin{lemma}[Damaschke~\cite{damaschke2009bounded}]\label{lem:almost-clique}
Let $\alpha$ be a constant.
There is a polynomial time algorithm that given an $\alpha$-almost clique $G$,
finds a set of edges $S$ of minimum size such that $G-S$ is a cluster graph.
\end{lemma}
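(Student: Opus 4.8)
The plan is to reduce cluster deletion on $G$ to a polynomial number of size-constrained bipartite assignment problems, each solved by one maximum-flow computation. First I would, by brute force over the $O(n^\alpha)$ subsets of $V(G)$ of size at most $\alpha$, find a set $X$ with $|X|\le\alpha$ such that $C:=V(G)\setminus X$ induces a clique; one such set exists by hypothesis and that is all we need. The reformulation to use is that if $G-S$ is a cluster graph with components $V_1,\dots,V_t$, then the $V_i$ partition $V(G)$ into cliques of $G$ and $|S|=|E(G)|-\sum_i\binom{|V_i|}{2}$, so it suffices to find a partition of $V(G)$ into cliques of $G$ maximizing $\sum_i\binom{|V_i|}{2}$. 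The structural claim that drives everything is that such an optimal partition has at most $\alpha+1$ parts: if two parts $V_i,V_j$ were both contained in $C$ then $V_i\cup V_j$ would again be a clique, and merging them strictly increases $\sum_i\binom{|V_i|}{2}$; hence at most one part lies inside $C$, while every other part contains a distinct vertex of $X$.

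Next I would enumerate the ``skeleton'' of the partition. Since $|X|\le\alpha$ is constant, there are only constantly many partitions of $X$ into blocks $X_1,\dots,X_t$ with $t\le\alpha+1$, each $G[X_i]$ a clique, and at most one block empty (the empty block stands for the possible part lying entirely in $C$); the restriction to $X$ of any clique partition of $V(G)$ is such a skeleton. Fixing a skeleton, each vertex $v\in C$ must be added to exactly one part, and it may be added to part $i$ precisely when $v$ is adjacent to all of $X_i$ --- all other clique conditions hold automatically because $C$ is a clique. Writing $n_i:=|X_i|$ and letting $m_i$ be the number of $C$-vertices assigned to part $i$, the quantity to maximize becomes $\sum_i\binom{n_i+m_i}{2}$, which depends on the assignment only through the vector $(m_1,\dots,m_t)$.

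Finally I would optimize the assignment. There are only $O(n^{\alpha})$ vectors $(m_1,\dots,m_t)$ with $\sum_i m_i=|C|$, and for each such vector, whether some valid assignment realizes exactly those part sizes is a feasibility question settled by a single max-flow computation: a source sends one unit to each $v\in C$, $v$ is joined to every admissible part $i$, and part $i$ sends capacity $m_i$ to the sink; a valid assignment exists iff the maximum flow equals $|C|$. Running over all (constantly many) skeletons and all feasible size vectors, I keep one maximizing $\sum_i\binom{n_i+m_i}{2}$ and output the induced set $S$ of edges joining distinct parts; this is optimal by the reformulation, and the whole procedure is polynomial because the number of skeletons depends only on $\alpha$.

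The step I expect to be the real obstacle is the structural claim that an optimal clique partition has at most one part disjoint from $X$, since the bound $t\le\alpha+1$ --- and hence the finiteness of the skeleton enumeration and the polynomial bound on the number of size vectors --- rests entirely on it; once that is in place, the decoupling into ``guess a constant-size skeleton, then solve a size-constrained bipartite assignment by flow'' is routine.
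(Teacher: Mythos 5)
The paper does not prove this lemma; it imports it from Damaschke~\cite{damaschke2009bounded}, so there is no in-paper proof to compare against. Your argument is correct and self-contained: the reduction to maximizing $\sum_i\binom{|V_i|}{2}$ over clique partitions is valid (components of $G-S$ are cliques of $G$, and an optimal $S$ deletes exactly the cross edges), the merging argument correctly bounds the number of parts by $\alpha+1$, the skeleton enumeration over $X$ is constant-size, and the size-constrained assignment of clique vertices to parts is indeed a bipartite feasibility problem settled by integral max flow. This matches the spirit of Damaschke's original argument (guess how the exceptional vertices split into clusters, observe that at most one cluster avoids $X$, then optimally distribute the large clique), so no gap to report.
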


%For vectors $a = (a_1,\ldots,a_p)$ and $b = (b_1,\ldots,b_q)$,
%$(a_1,\ldots,a_{i-1},\allowbreak a_i+b,a_{i+1},\ldots,a_p)$ is the vector
%$(a_1,\ldots,a_{i-1},a_i+b_1,\ldots,a_i+b_q,a_{i+1},\ldots,a_p)$.

\section{The algorithm of B{\"o}cker and Damaschke}\label{sec:simple}
In this section we describe the algorithm of
B{\"o}cker and Damaschke~\cite{bocker2011even} and give a corrected analysis of
the algorithm.
The algorithm is a branching algorithm.
When we say that the algorithm \emph{branches} on sets $S_1,\ldots,S_p$, we mean
that the algorithm is called recursively on the instances
$(G-S_1,k-|S_1|),\ldots,(G-S_p,k-|S_p|)$.

The algorithm uses the following branching rules.
\begin{brule}
Suppose that $P$ is an induced path in $G$ with at least 7 vertices,
and let $e_1,e_2,\ldots,e_p$ be the edges of $P$ (in order).
Branch on $\{e_1,e_3,\ldots\}$ and $\{e_2,e_4,\ldots\}$.\label{rule:path}
\end{brule}

We note that Rule~(\currentrule) is a simplified version of the corresponding
rule given in~\cite{bocker2011even}.
To show the safeness of this rule, suppose that $(G,k)$ is a yes instance
and let $S$ be a solution of $(G,k)$ (namely, $G-S$ is a cluster graph and
$|S| \leq k$).
Since $V(e_1) \cup V(e_2)$ induces a $P_3$ in $G$, $S$ contains at least one
edge from $e_1,e_2$.
Suppose that $e_1 \in S$.
The set $V(e_2)\cup V(e_3)$ induces a $P_3$ in $G$, so either $e_2 \in S$ or
$e_3 \in S$.
Therefore, the set $S_2 = (S\setminus \{e_2\}) \cup \{e_3\}$
is also a solution of $(G,k)$
(since the deletion of $e_3$ from $G$ does not generate new induced $P_3$'s).
By repeating this process we obtain a solution $S'$ such that
$\{e_1,e_3,\ldots\} \subseteq S'$.
Similarly, if $e_2 \in S$ we can obtain a solution $S'$ such that
$\{e_2,e_4,\ldots\} \subseteq S'$.
Therefore, Rule~(\currentrule) is safe.

The branching vector of Rule~(\currentrule) is at least $(3,3)$
and the branching number is less than 1.26.

For an edge $e$, let $F_e$ be a set containing all edges $e'$ such that
$V(e) \cup V(e')$ induces a $P_3$.

\begin{brule}
If there is an edge $e$ for which $|F_e| \geq 4$,
branch on $\{e\}$ and $F_e$.\label{rule:14}
\end{brule}

The branching vector of Rule~(\currentrule) is at least $(1,4)$
and the branching number is less than 1.381.

\begin{brule}
Let $v_1,v_2,v_3,v_4,v_1$ be an induced $C_4$ in $G$.
Branch on $\{(v_1,v_2),(v_3,v_4)\}$ and
$\{(v_2,v_3),(v_4,v_1)\}$.\label{rule:c4}
%If $G$ has an induced $C_4$, let $e_1,e_2,e_3,e_4$ be the edges of this cycle, in order.
%Branch on $\{e_1,e_3\}$ and on $\{e_2,e_4\}$.
\end{brule}

The branching vector of Rule~(\currentrule) is $(2,2)$
and the branching number is less than 1.415.

\refstepcounter{brule}
\label{rule:P3}
\refstepcounter{brule}
\label{rule:P3new}
The algorithm applies Rules (B1)--(B\ref{rule:c4}) until none of these rule is
applicable.
If $G$ is a graph on which Rules (B1)--(B\ref{rule:c4}) are not applicable
and $G$ is not a cluster graph, the algorithm applies a new branching rule,
denoted (B\ref{rule:P3}), which is described below.

In Rule~(B\ref{rule:P3}), the algorithm finds an induced $P_3$ $u,v,w$.
To describe the rule, we define the following sets.
Let $A = \{u,v,w\}$.
Let $B$ the set of all vertices not in $A$ with one or two neighbors in $A$.
Let $B_u$ (resp., $B_w$) be the set of all vertices $b\in B$ such that
$b$ is adjacent to exactly one vertex from $u,v$ (resp., from $w,v$).
Clearly, $B = B_u \cup B_w$.
Let $C$ be the set of all vertices with three neighbors in $A$.
Let $D$ be the set of all vertices not in $A \cup B \cup C$ that have at least
one neighbor in $C$.

For $i \geq 1$, let $B_i$ be a set containing every vertex $x$
not in $A \cup B \cup C \cup D$ such that the minimum distance between $x$ and
a vertex in $B$ is exactly $i$. Additionally, denote $B_0 = B$ and $B_{-1} = A$.
For a vertex $x \in B_i$, let $\Nnext{x}$, $\Nsame{x}$, and $\Nprev{x}$ denote
the sets of neighbors of $x$ in $B_{i+1}$, $B_i$, and $B_{i-1}$, respectively.
Note that by definition, $\Nprev{x} \neq \emptyset$ for every
$x \in \bigcup_{i \geq 0} B_i$.
For $i \geq 0$, let $E_i$ be the set of all edges with one endpoint in $B_i$
and the other endpoint in $B_{i+1}$.

The following lemmas are proved in~\cite{bocker2011even}.
\begin{lemma}\label{lem:Bu}
If $G$ is a graph in which Rule~(B\ref{rule:14}) is not applicable
then $|B_u| \leq 2$ and $|B_w| \leq 2$.
\end{lemma}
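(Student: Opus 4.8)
The plan is to bound $|B_u|$ by identifying the edge set $F_{uv}$ precisely, and then to obtain the bound on $|B_w|$ by the symmetric argument applied to the edge $vw$ (swapping the roles of $u$ and $w$). Since Rule~(B\ref{rule:14}) is not applicable to $G$, every edge $e$ of $G$ satisfies $|F_e| \le 3$; in particular $|F_{uv}| \le 3$ and $|F_{vw}| \le 3$.

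First I would determine exactly which edges belong to $F_{uv}$. Since $V(uv) \cup V(e')$ must consist of exactly three vertices, any $e' \in F_{uv}$ shares precisely one endpoint with $uv$, so $e'$ has the form $ux$ or $vx$ for some vertex $x \notin \{u,v\}$. If $e' = ux$, then $\{u,v,x\}$ induces a $P_3$ with edges $uv$ and $ux$, which forces $vx \notin E(G)$; conversely any vertex $x$ with $ux \in E(G)$ and $vx \notin E(G)$ gives $ux \in F_{uv}$. Symmetrically, $vx \in F_{uv}$ iff $vx \in E(G)$ and $ux \notin E(G)$. Now I would split off the vertex $x = w$: it yields exactly the one edge $vw$ (recall $uw \notin E(G)$ as $u,v,w$ is an induced $P_3$). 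Every other vertex $x$ arising this way lies outside $A$ and has exactly one neighbour in $\{u,v\}$, hence lies in $B_u$; conversely every $b \in B_u$ contributes exactly one edge to $F_{uv}$, namely $ub$ if $ub \in E(G)$ and $vb$ if $vb \in E(G)$. These edges are pairwise distinct and distinct from $vw$ (since $w \notin B_u$), so $|F_{uv}| = |B_u| + 1$.

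Combining, $|B_u| + 1 = |F_{uv}| \le 3$, hence $|B_u| \le 2$. Repeating the identical reasoning with the edge $vw$ in place of $uv$ shows $|F_{vw}| = |B_w| + 1 \le 3$, hence $|B_w| \le 2$, which completes the proof.

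The only point that needs care is the bookkeeping in the characterization of $F_{uv}$: verifying that the edge $vw$ is counted exactly once and is not among the edges contributed by the vertices of $B_u$, and that each element of $B_u$ contributes exactly one edge, so the count is precisely $|B_u|+1$ and not larger. Everything else follows directly from the definitions of $B$, $B_u$, $B_w$, and $F_e$.
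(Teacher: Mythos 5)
Your proof is correct. The paper itself does not prove this lemma — it only cites B\"ocker and Damaschke for it — but your argument is the natural one and matches the way the paper uses $F_e$ elsewhere (e.g., in Lemma~\ref{lem:Nnext-2}): each $b \in B_u$ contributes a distinct edge to $F_{(u,v)}$, the edge $(v,w)$ contributes one more, and non-applicability of Rule~(B\ref{rule:14}) gives $|F_{(u,v)}| \le 3$, hence $|B_u| \le 2$, with the symmetric count on $F_{(v,w)}$ giving $|B_w| \le 2$.
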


\begin{lemma}\label{lem:C}
If $G$ is a graph in which Rule~(B\ref{rule:c4}) is not applicable
then $C$ is a clique.
\end{lemma}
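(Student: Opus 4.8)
The plan is to argue by contradiction. Suppose $C$ is not a clique; then there exist two distinct vertices $c_1, c_2 \in C$ with $c_1 c_2 \notin E(G)$. By the definition of $C$, both $c_1$ and $c_2$ are adjacent to each of $u$, $v$, and $w$. In particular, neither $c_1$ nor $c_2$ lies in $A$, since inside $A$ the vertex $v$ has two neighbors and $u, w$ have only one neighbor each, so no vertex of $A$ has three neighbors in $A$.

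Next I would exhibit a forbidden induced subgraph. Consider the four vertices $c_1, u, c_2, w$. They are pairwise distinct: $c_1 \neq c_2$ by the choice above, $u \neq w$, and $c_1, c_2 \notin \{u,w\}$ because $c_1, c_2 \notin A$. The four edges $c_1 u$, $u c_2$, $c_2 w$, $w c_1$ all belong to $G$ because $c_1, c_2 \in C$ are adjacent to $u$ and to $w$. On the other hand, $c_1 c_2 \notin E(G)$ by the choice of $c_1, c_2$, and $u w \notin E(G)$ because $u,v,w$ is an \emph{induced} $P_3$. Hence $c_1, u, c_2, w, c_1$ is a chordless $4$-cycle, i.e. an induced $C_4$ in $G$. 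But then Rule~(B\ref{rule:c4}) is applicable to $G$, contradicting the hypothesis of the lemma. Therefore $C$ is a clique.

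There is no genuinely hard step here; the argument is a direct one-line case analysis. The only points requiring a little care are verifying that the four chosen vertices are actually distinct — which is exactly where one uses $C \cap A = \emptyset$ — and checking that \emph{both} of the two ``diagonals'' $c_1 c_2$ and $u w$ are non-edges, the latter being precisely the place where the fact that $u,v,w$ forms an induced $P_3$ (and not merely a walk) enters.
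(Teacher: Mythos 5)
Your proof is correct: the paper omits a proof of this lemma (citing B\"ocker and Damaschke), but your argument is exactly the intended one, and indeed the paper itself uses the same construction in Section~4, where it takes two non-adjacent $x,y\in C$ and observes that $u,x,w,y,u$ is an induced $C_4$. You correctly verify the two non-trivial points, namely $C\cap A=\emptyset$ (so the four vertices are distinct) and that both diagonals $uw$ and $c_1c_2$ are non-edges.
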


\begin{lemma}\label{lem:D}
Let $G$ be a graph in which Rule~(B\ref{rule:14}) is not applicable
and $C$ is clique.
Then, $|D| \leq 3$.
Additionally, for every $x\in D$, $C \subseteq N(x) \subseteq B \cup C \cup D$.
\end{lemma}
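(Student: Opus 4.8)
The plan is to exploit the non‑applicability of Rule~(B\ref{rule:14}): it means $|F_e| \le 3$ for \emph{every} edge $e$ of $G$. If $C = \emptyset$ then $D = \emptyset$ and both assertions are trivial, so assume $C \neq \emptyset$. The first observation I would record is that every $x \in D$ has \emph{no} neighbour in $A$: since $x \notin A \cup B \cup C$, it has neither one or two neighbours in $A$ (else $x \in B$) nor three (else $x \in C$). Fix $x \in D$, and let $c \in C$ be a neighbour of $x$ (such a $c$ exists by the definition of $D$).

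The key step is to pin down $F_{(x,c)}$ exactly. Since $c$ is adjacent to each of $u,v,w$ while $x$ is adjacent to none of them, each of $\{x,c,u\}$, $\{x,c,v\}$, $\{x,c,w\}$ induces a $P_3$ with centre $c$; hence $(c,u),(c,v),(c,w)$ are three distinct edges lying in $F_{(x,c)}$. As $|F_{(x,c)}| \le 3$, we conclude $F_{(x,c)} = \{(c,u),(c,v),(c,w)\}$. From this identity the ``additionally'' part follows quickly. For $N(x) \subseteq B \cup C \cup D$: a neighbour $y$ of $x$ outside $B \cup C \cup D$ is also outside $A$ (as $x$ has no neighbour in $A$), hence outside $A \cup B \cup C$, and since $y \notin D$ it has no neighbour in $C$, so in particular $y \not\sim c$; then $\{x,c,y\}$ induces a $P_3$ with centre $x$, so $(x,y) \in F_{(x,c)}$, contradicting the identity (the edge $(x,y)$ is not incident to $c$, hence is none of $(c,u),(c,v),(c,w)$). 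For $C \subseteq N(x)$: here I would use the hypothesis that $C$ is a clique. If some $c' \in C$ with $c' \neq c$ were non‑adjacent to $x$, then $\{x,c,c'\}$ would induce a $P_3$ with centre $c$ (using $c \sim c'$), so $(c,c') \in F_{(x,c)}$, forcing $c' \in \{u,v,w\}$ --- impossible since $C \cap A = \emptyset$.

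For the bound $|D| \le 3$, fix any $c \in C$. By the previous paragraph every $x \in D$ is adjacent to $c$, and no $x \in D$ is adjacent to $u$, so $\{c,u,x\}$ induces a $P_3$ with centre $c$ and $(c,x) \in F_{(c,u)}$; also $(c,w) \in F_{(c,u)}$ because $u \not\sim w$. These are $|D|+1$ pairwise distinct edges in $F_{(c,u)}$, none of which is $(c,u)$ itself, so $|D|+1 \le 3$. (This argument in fact yields the slightly stronger bound $|D| \le 2$, which of course implies the claimed $|D| \le 3$.)

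I do not expect a genuine obstacle here; the only care needed is routine bookkeeping --- each time I claim that a $3$‑element set induces a $P_3$ I must check that the three vertices are distinct and that the single required non‑edge is present, and each time I count edges inside some $F_e$ I must check they are pairwise distinct and different from $e$. The hypothesis that $C$ is a clique is used exactly once, in establishing $C \subseteq N(x)$.
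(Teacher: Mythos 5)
Your proof is correct. Note that the paper itself does not prove this lemma --- it is stated with the remark that it is proved in the cited work of B\"ocker and Damaschke --- so there is no in-paper argument to compare against; your write-up supplies a clean, self-contained proof from the paper's definitions. All the delicate points check out: every $x\in D$ indeed has no neighbour in $A$ (so the various triples you exhibit really induce $P_3$'s), the three edges $(c,u),(c,v),(c,w)$ pin down $F_{(x,c)}$ exactly under the hypothesis that Rule~(B\ref{rule:14}) is inapplicable, and the clique hypothesis on $C$ is used exactly where you say. Your final counting argument in $F_{(c,u)}$ even yields the stronger bound $|D|\le 2$ (the $|D|$ edges $(c,x)$ together with $(c,w)$ force $|D|+1\le 3$), which of course implies the stated $|D|\le 3$; the weaker constant in the statement presumably just reflects what was needed downstream.
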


We now give several lemmas which will be used in the analysis of
Rule~(B\ref{rule:P3}).
\begin{lemma}\label{lem:Nnext-2}
If $G$ is a graph in which Rule~(B\ref{rule:14}) cannot be applied,
$|\Nnext{x}| \leq 2$ for every $x \in \bigcup_{i \geq 0} B_i$.
\end{lemma}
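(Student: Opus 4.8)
The plan is to use the hypothesis directly. Since Rule~(B\ref{rule:14}) is not applicable, every edge $e$ of $G$ satisfies $|F_e|\le 3$; so it will be enough, given $x\in B_i$ (for some $i\ge 0$), to exhibit a single edge $e$ with $|F_e|\ge|\Nnext{x}|+1$, as this forces $|\Nnext{x}|\le 2$.

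I would take $e=(x,y)$ for a suitably chosen $y\in\Nprev{x}$ (nonempty by the remark following the definitions). The first step is to observe that, for \emph{any} such $y$ and any $z\in\Nnext{x}$, the triple $\{y,x,z\}$ induces a $P_3$: it contains the edges $xy$ and $xz$, while $yz\notin E(G)$ because $y$ lies in $B_{i-1}$ and $z$ in $B_{i+1}$ and no vertex of $B_{i-1}$ has a neighbour in $B_{i+1}$ — for $i\ge 1$ by the triangle inequality for distance to $B$, and for $i=0$ because, by definition, no vertex of $B_1$ has a neighbour in $A=B_{-1}$. This already puts the $|\Nnext{x}|$ distinct edges $\{\,xz:z\in\Nnext{x}\,\}$ into $F_e$.

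The remaining step is to choose $y$ so that $F_e$ contains one further edge, one not incident to $x$ (hence distinct from all the $xz$ above). For $i\ge 1$ any $y\in\Nprev{x}\subseteq B_{i-1}$ will do: since $i-1\ge 0$ we have $\Nprev{y}\neq\emptyset$, so picking $y'\in\Nprev{y}\subseteq B_{i-2}$ we get $xy'\notin E(G)$ — otherwise $x$ would be within distance $i-1$ of $B$, or adjacent to $A$ when $i=1$, contradicting $x\in B_i$ — so $\{y',y,x\}$ is an induced $P_3$ and $yy'\in F_e$. For $i=0$ I would instead use the structure of $A$: $G[A]$ is a $P_3$ and hence connected, $N(x)\cap A\neq\emptyset$, and $A\setminus N(x)\neq\emptyset$ because $x\in B$ has at most two neighbours in the $3$-element set $A$; connectivity then provides an edge $pq$ of $G[A]$ with $p\in N(x)\cap A$ and $q\in A\setminus N(x)$, and taking $y=p$ (which lies in $\Nprev{x}$, since $\Nprev{x}=N(x)\cap A$ when $i=0$) makes $\{q,p,x\}$ an induced $P_3$, so $pq\in F_e$. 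Either way $|F_e|\ge|\Nnext{x}|+1$, and with $|F_e|\le 3$ this gives $|\Nnext{x}|\le 2$.

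The main obstacle will be the bookkeeping about the layers $B_j$: one has to verify carefully that $B_{i-1}$ and $B_{i+1}$ are non-adjacent and that a vertex of $B_i$ has no neighbour in $B_{i-2}$ (respectively in $A$ when $i=1$), which follows from the defining property of $B_j$ (vertices outside $A\cup B\cup C\cup D$ at distance exactly $j$ from $B$) together with the triangle inequality, and to handle the base case $i=0$ separately through the connectivity of $G[A]$.
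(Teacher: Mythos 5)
Your proof is correct and follows essentially the same route as the paper: both arguments fix $y\in\Nprev{x}$, place the $|\Nnext{x}|$ edges $(x,z)$ together with one additional edge (namely $(y,y')$ for $y'\in\Nprev{y}$ when $i\geq 1$, or an edge of $G[A]$ when $i=0$) inside $F_{(x,y)}$, and conclude from $|F_{(x,y)}|\leq 3$. Your write-up merely makes explicit the non-adjacency checks between layers that the paper leaves implicit.
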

\begin{proof}
We first consider the case $x \notin B_0$.
Let $y \in \Nprev{x}$ and $z\in \Nprev{y}$.
The set $F_{(x,y)}$ contains the edge $(y,z)$ and the edge
$(x,x')$ for every $x' \in \Nnext{x}$.
Therefore, $|\Nnext{x}| \leq |F_{(x,y)}|-1 \leq 2$.

We now consider the case $x \in B_0 = B$.
By the definition of $B$, there are vertices $y,z \in A$ such that
$y$ is adjacent to $x$ and $z$, and $z$ is not adjacent to $x$.
Therefore, the set $F_{(x,y)}$ contains the edge $(y,z)$ and the edge
$(x,x')$ for every $x' \in \Nnext{x}$.
We obtain again that $|\Nnext{x}| \leq 2$.
\end{proof}

By Lemma~\ref{lem:Nnext-2} we have that $|B_1| \leq 2|B| \leq 8$ and
$|E_0| \leq 8$.

Let $j$ be the minimum index such that there is an edge $e \in E_j$
for which $|F_e| \geq 3$
(we note that while $G$ does not have an edge $e$ with $|F_e| > 3$,
we will later use this definition on graphs which can have such edges).
If no such index exists, $j = \infty$.
\begin{lemma}\label{lem:Nnext}
If $j\geq 1$, $|\Nnext{x}| \leq 1$ for every $x \in B_1 \cup \cdots \cup B_j$.
\end{lemma}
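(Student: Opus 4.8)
The plan is a proof by contradiction. Suppose some vertex $x \in B_1 \cup \cdots \cup B_j$ has two distinct neighbours $x_1, x_2$ in the next layer; say $x \in B_i$ with $1 \le i \le j$. Fix $y \in \Nprev{x}$, which is nonempty since $\Nprev{\cdot}$ is nonempty on $\bigcup_{i \ge 0} B_i$, and then fix $z \in \Nprev{y}$, again nonempty: for $i = 1$ it is a vertex of $A = B_{-1}$, and for $i \ge 2$ it lies in $B_{i-2}$. The edge $(y,x)$ belongs to $E_{i-1}$, and since $i - 1 < j$ (using $i \le j$, and the hypothesis $j \ge 1$ to make this meaningful already when $i = 1$), minimality of $j$ forces $|F_{(y,x)}| \le 2$. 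I will derive a contradiction by exhibiting three pairwise distinct edges in $F_{(y,x)}$.

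The first two are $(x,x_1)$ and $(x,x_2)$. Since $x_1, x_2 \in B_{i+1}$ lie at distance $i+1$ from $B$ while $y \in B_{i-1}$ lies at distance $i-1$, an edge from $y$ to $x_\ell$ would place $x_\ell$ within distance $i$ of $B$ --- impossible; hence $y \not\sim x_\ell$, so $\{y, x, x_\ell\}$ induces a $P_3$ and $(x, x_\ell) \in F_{(y,x)}$ for $\ell \in \{1,2\}$.

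The third edge is $(y,z)$; here I must check $z \not\sim x$. If $i \ge 2$ this is again a distance argument: $z \in B_{i-2}$, so $z \sim x$ would put $x \in B_i$ within distance $i-1$ of $B$, a contradiction. If $i = 1$, then $z \in A$, and $z \sim x$ would give $x \notin A$ a neighbour in $A$; by the definitions of $B$ and $C$ this forces $x \in B \cup C$, contradicting $x \in B_1$. Either way $z \not\sim x$, so $\{z, y, x\}$ induces a $P_3$ and $(y,z) \in F_{(y,x)}$. Finally, the three edges are pairwise distinct: $(x,x_1) \ne (x,x_2)$ since $x_1 \ne x_2$, and $(y,z)$ differs from both since $z$ lies in $A$ or in $B_{i-2}$, layers disjoint from $B_i \ni x$ and $B_{i+1} \ni x_1, x_2$, so $(y,z)$ meets each of $(x,x_1),(x,x_2)$ in at most the single vertex $y$. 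Thus $|F_{(y,x)}| \ge 3$, the desired contradiction.

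I do not expect a genuine obstacle here; the only point needing care is the layer bookkeeping near the bottom of the structure, which is why the argument for $z \not\sim x$ splits into $i = 1$ (where $\Nprev{y} \subseteq A$) and $i \ge 2$ (where $\Nprev{y} \subseteq B_{i-2}$). One also has to be disciplined about invoking the minimality of $j$ only for the edge $(y,x) \in E_{i-1}$ and never for an edge of $E_j$ itself, since the latter may have $|F_e| \ge 3$.
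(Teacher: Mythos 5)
Your proof is correct and is essentially the paper's argument: both exhibit the edges $(y,z)$ and $(x,x')$ for $x'\in \Nnext{x}$ as distinct members of $F_{(x,y)}$ and then invoke the minimality of $j$ to get $|F_{(x,y)}|\leq 2$ for the edge $(x,y)\in E_{i-1}$ with $i-1<j$. The only difference is presentational (you argue by contradiction and spell out the non-adjacency checks that the paper delegates to the proof of Lemma~\ref{lem:Nnext-2}).
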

\begin{proof}
Let $y \in \Nprev{x}$ and $z\in \Nprev{v}$.
As in the proof of Lemma~\ref{lem:Nnext-2}, $|\Nnext{x}| \leq |F_{(x,y)}|-1$.
Since the edge $(x,y)$ is in $E_0 \cup \cdots \cup E_{j-1}$,
by the definition of $j$, $|F_{(x,y)}| \leq 2$ and the lemma follows.
\end{proof}
\begin{lemma}\label{lem:Nsame}
If $j\geq 2$, $|\Nsame{x}| \leq 1$ for every $x \in B_2 \cup \cdots \cup B_j$.
Additionally, if $|\Nsame{x}| = 1$ then $\Nnext{x} = \emptyset$.
\end{lemma}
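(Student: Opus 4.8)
The plan is to mimic the structure of the proof of Lemma~\ref{lem:Nnext}, but now feeding an induced $P_3$ into the definition of $F_e$ via a witness edge in the previous layer. Fix $x \in B_i$ with $2 \le i \le j$, pick $y \in \Nprev{x}$ and then $z \in \Nprev{y}$ (which exists since $y \in B_{i-1}$ with $i-1 \ge 1 \ge 0$, so $\Nprev{y} \ne \emptyset$). Note $z \in B_{i-2}$, and since the distance from $x$ to $B_0$ is exactly $i$, the vertex $x$ is not adjacent to $z$ (otherwise $x$ would be within distance $i-1$ of $B_0$); also $y$ is adjacent to both $x$ and $z$, so $V((x,y)) \cup V((y,z))$ induces a $P_3$ and hence $(y,z) \in F_{(x,y)}$. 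The edge $(x,y)$ lies in $E_{i-1} \subseteq E_0 \cup \cdots \cup E_{j-1}$, so by the definition of $j$ we have $|F_{(x,y)}| \le 2$, i.e.\ $F_{(x,y)} = \{(y,z)\}$ possibly together with one more edge.

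Next I would exhibit, for each candidate "same-layer" or "next-layer" neighbour of $x$, an edge that must lie in $F_{(x,y)}$ and is distinct from $(y,z)$. For $x' \in \Nsame{x}$: is $V((x,y)) \cup V((x,x'))$ an induced $P_3$? We have $x$ adjacent to both $y$ and $x'$; it remains to check $y$ is not adjacent to $x'$. Here I expect to need that $x' \in B_i$ forces $x'$ to be non-adjacent to $y \in B_{i-1}$ — but that is false in general (a vertex of layer $i$ can be adjacent to a vertex of layer $i-1$; that is exactly how layers propagate). So instead the right witness edge is $(x,x')$ paired against something in the previous layer: $V((x,x')) \cup V((x,y))$ — no, same issue. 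The correct move is to pair $x'$ against $z$: since $x' \in B_i$ and $z \in B_{i-2}$, $x'$ is not adjacent to $z$, and if $x'$ were adjacent to $y$ then $x'$ would be at distance $i-1$ from $B_0$, contradiction; so $x'$ is adjacent to neither $y$ nor $z$. That does not immediately give a $P_3$ through $(x,y)$ either. The clean statement is: $(x,x')$ and $(x,y)$ — we need $x' \not\sim y$. And indeed $x' \not\sim y$ holds precisely because $x'\in B_i$ and $y\in B_{i-1}$ would put $x'$ at distance $i-1$; wait, no: $x' \sim y$ with $y$ at distance $i-1$ would put $x'$ at distance $i$, which is consistent. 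So this needs more care.

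Let me restate the intended argument more carefully. Every neighbour of $x$ lies in $B_{i-1} \cup B_i \cup B_{i+1}$ (distances change by at most one along an edge), so $N(x) = \Nprev{x} \cup \Nsame{x} \cup \Nnext{x}$. By Lemma~\ref{lem:Nnext}, $|\Nnext{x}| \le 1$. Now for any $x' \in \Nsame{x} \cup \Nnext{x}$, I claim $x' \notin N(y)$: indeed $y \in B_{i-1}$, so a neighbour of $y$ lies in $B_{i-2}\cup B_{i-1}\cup B_i$, hence $x'\in B_i$ is allowed only if $x'\in \Nsame{x}$ — so this rules out $\Nnext{x}$ neighbours from being adjacent to $y$, but not $\Nsame{x}$ ones. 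The actual additional ingredient must be something like: if $x'\in\Nsame{x}$ and $x'\sim y$, then $\{x,x',y\}$ plus the structure forces $(x,y)$ to have a larger $F$, or one uses a $C_4$ ($x,y,?,x'$) — and Rule~(B\ref{rule:c4}) is not applicable. So the hard part, which I would isolate as the technical heart: show that $(x,x')$ or $(y,x')$ contributes a second edge to $F_{(x,y)}$ whenever $x'\in\Nsame{x}$, forcing $|\Nsame{x}|\le 1$; and that if this single same-layer neighbour exists, then any $\Nnext{x}$-neighbour $x''$ would contribute yet a third edge to $F_{(x,y)}$ (via $(x,x'')$, which is an induced $P_3$ with $(x,y)$ since $x''\in B_{i+1}$ cannot be adjacent to $y\in B_{i-1}$), contradicting $|F_{(x,y)}|\le 2$ and hence $\Nnext{x}=\emptyset$. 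The clean way to get the "second edge" is to observe $(x,x')$ itself lies in $F_{(x,y)}$ precisely when $x'\not\sim y$; so I would split into the case $x'\not\sim y$ (done directly, two distinct edges $(y,z),(x,x')$ in $F_{(x,y)}$) and the case $x'\sim y$, where $x,x',y$ together with $z$ — noting $x'\not\sim z$ since $x'\in B_i$, $z\in B_{i-2}$, and $i\ge 2$ — and the $P_3$ $x',y,z$ shows $(x',y)\in F_{?}$... the cleanest is then that $\{x,y,x',\cdot\}$ forms a $C_4$ or that $F_{(x',y)}$ is large. I expect the main obstacle to be exactly this bookkeeping: correctly choosing, in the $x'\sim y$ subcase, which pair of edges spans an induced $P_3$, and invoking non-applicability of Rule~(B\ref{rule:c4}) (and possibly Lemma~\ref{lem:Bu}/\ref{lem:C}) to close it; the counting of $F_{(x,y)}$ via the witness $(y,z)$ is the easy skeleton.
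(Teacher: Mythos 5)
Your proof is not complete: you correctly set up the skeleton (the witness edge $(y,z)\in F_{(x,y)}$, the bound $|F_{(x,y)}|\leq 2$ from the minimality of $j$ since $(x,y)\in E_{i-1}\subseteq E_0\cup\cdots\cup E_{j-1}$, and the edges $(x,x'')$ for $x''\in\Nnext{x}$), but you explicitly leave open the subcase where $x'\in\Nsame{x}$ is adjacent to $y$, and that subcase is the entire content of the lemma. Your instinct that the layer structure alone does not rule out $x'\sim y$ is right, and your proposed escape routes (non-applicability of Rule~(B\ref{rule:c4}), Lemma~\ref{lem:Bu} or~\ref{lem:C}) are not carried out and are not what is needed.

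The missing idea is to apply Lemma~\ref{lem:Nnext} to $y$ rather than to $x$. Since $x\in B_i$ with $2\leq i\leq j$, we have $y\in B_{i-1}$ with $1\leq i-1\leq j-1$, so Lemma~\ref{lem:Nnext} gives $|\Nnext{y}|\leq 1$; as $x\in\Nnext{y}$, this forces $\Nnext{y}=\{x\}$. Any $x'\in\Nsame{x}$ lies in $B_i$, so if $x'$ were adjacent to $y$ it would belong to $\Nnext{y}$, contradicting $\Nnext{y}=\{x\}$. Hence $x'\not\sim y$, the set $V((x,y))\cup V((x,x'))$ induces a $P_3$, and $(x,x')\in F_{(x,y)}$. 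Together with $(y,z)$ and the edges to $\Nnext{x}$, this gives $|\Nsame{x}|+|\Nnext{x}|\leq |F_{(x,y)}|-1\leq 1$, which yields both assertions of the lemma simultaneously. Without this one-layer-down application of Lemma~\ref{lem:Nnext}, your argument does not close.
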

\begin{proof}
Let $y \in \Nprev{x}$ and $z\in \Nprev{v}$.
Since $y \in B_1 \cup\cdots\cup B_{j-1}$, by Lemma~\ref{lem:Nnext} we have that
$\Nnext{y} = \{x\}$.
Therefore, for every $x' \in \Nsame{x}$ we have that $(x,x') \in F_{(x,y)}$.
Since $F_{(x,y)}$ also contains every $x' \in \Nnext{x}$, we conclude that
`$|\Nsame{x}|+|\Nnext{x}| \leq |F_{(x,y)}|-1 \leq 1$, and the lemma follows.
\end{proof}
\begin{lemma}\label{lem:Nprev}
If $j\geq 3$, $|\Nprev{x}| \leq 2$ for every $x \in B_3 \cup \cdots \cup B_j$.
Additionally, if $|\Nprev{x}| = 2$ then $\Nnext{x} = \Nsame{x} = \emptyset$.
\end{lemma}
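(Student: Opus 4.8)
The plan is to fix a vertex $x \in B_i$ with $3 \le i \le j$, pick an arbitrary $y \in \Nprev{x}$ (the set $\Nprev{x}$ is nonempty by the remark following the definition of the layers), and bound $|F_{(x,y)}|$ from below by $|\Nprev{x}| + |\Nsame{x}| + |\Nnext{x}|$. Since $x \in B_i = B_{(i-1)+1}$ and $y \in B_{i-1}$, the edge $(x,y)$ lies in $E_{i-1}$, and $i-1 \le j-1 < j$, so by the minimality of $j$ we get $|F_{(x,y)}| \le 2$. Combining the two estimates gives $|\Nprev{x}| + |\Nsame{x}| + |\Nnext{x}| \le 2$, which yields both assertions of the lemma at once: $|\Nprev{x}| \le 2$, and if $|\Nprev{x}| = 2$ then $|\Nsame{x}| = |\Nnext{x}| = 0$.

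The first structural fact I would establish is that $\Nprev{x}$ is an independent set. For any $y \in \Nprev{x}$ we have $x \in \Nnext{y}$ (as $x \in B_i$ is a neighbour of $y \in B_{i-1}$), so $\Nnext{y} \ne \emptyset$. Since $y \in B_{i-1}$ with $2 \le i-1 \le j$, Lemma~\ref{lem:Nsame} forces $\Nsame{y} = \emptyset$ (it could not equal $1$, as that would make $\Nnext{y}$ empty) and Lemma~\ref{lem:Nnext} forces $\Nnext{y} = \{x\}$. In particular, any two distinct vertices of $\Nprev{x}$ both lie in $B_{i-1}$ and so, having empty $\Nsame{\cdot}$, cannot be adjacent. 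I would combine this with the trivial remark that two vertices in layers whose indices differ by at least $2$ are non-adjacent (because adjacent vertices have distances to $B$ differing by at most $1$).

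Next I would exhibit explicit members of $F_{(x,y)}$. First, for each $z \in \Nprev{y}$ the edge $(y,z)$ lies in $F_{(x,y)}$: here $z \in B_{i-2}$ is a neighbour of $y$, $z \ne x$, and $z$ is not adjacent to $x$ by the layer-gap remark, so $\{x,y,z\}$ induces a $P_3$; moreover $\Nprev{y} \ne \emptyset$. Second, for each $x' \in (\Nprev{x}\setminus\{y\}) \cup \Nsame{x} \cup \Nnext{x}$ the edge $(x,x')$ lies in $F_{(x,y)}$: each such $x'$ is a neighbour of $x$ distinct from $y$ and not adjacent to $y$ — using the independence of $\Nprev{x}$ for the first piece, $\Nnext{y} = \{x\}$ for the $\Nsame{x}$ piece, and the layer-gap remark for the $\Nnext{x}$ piece — so $\{x,y,x'\}$ induces a $P_3$. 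All the listed edges are pairwise distinct (those incident to $y$ are distinct from those incident to $x$ since $x \ne y$), hence $|F_{(x,y)}| \ge |\Nprev{y}| + (|\Nprev{x}|-1) + |\Nsame{x}| + |\Nnext{x}| \ge |\Nprev{x}| + |\Nsame{x}| + |\Nnext{x}|$, and we are done.

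The proof is essentially bookkeeping once the independence of $\Nprev{x}$ is in hand, so I do not expect a genuine obstacle; the points requiring care are (a) verifying every claimed non-adjacency so that each edge thrown into $F_{(x,y)}$ really induces a $P_3$ with $(x,y)$, and (b) checking that the index ranges ($2 \le i-1 \le j$ and $i-1 \le j-1$) are exactly what Lemmas~\ref{lem:Nnext} and~\ref{lem:Nsame} and the definition of $j$ require — this is precisely where the hypothesis $j \ge 3$ is used.
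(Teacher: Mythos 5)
Your proof is correct and follows exactly the route the paper intends: the paper omits this proof, saying only that it is similar to that of Lemma~\ref{lem:Nsame}, and your argument is precisely that proof extended to additionally count the edges $(x,x')$ for $x' \in \Nprev{x}\setminus\{y\}$, with the independence of $\Nprev{x}$ (obtained from Lemmas~\ref{lem:Nnext} and~\ref{lem:Nsame} applied to the vertices of $B_{i-1}$, which is where $j\geq 3$ enters) justifying that these edges belong to $F_{(x,y)}$. The index bookkeeping and the non-adjacency checks are all valid, so this is a complete and faithful filling-in of the omitted argument.
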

\begin{lemma}\label{lem:Nprev-2}
If $j\geq 2$, if $x \in B_2$ and $\Nnext{x} \neq \emptyset$ then
$|\Nprev{x}| \leq 2$.
Additionally, let $x'$ be the unique vertex in $\Nnext{x}$.
Then, $N(x') = \{x\}$.
\end{lemma}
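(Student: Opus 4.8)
The plan is to pin down the neighbourhood structure around $x$ so tightly that any violation of the conclusion yields a forbidden configuration: an induced $P_7$ (Rule~(B\ref{rule:path})), an induced $C_4$ (Rule~(B\ref{rule:c4})), or an edge $e$ with $|F_e|\ge 4$ (Rule~(B\ref{rule:14})). Write $u,v,w$ for the vertices of the induced $P_3$ of Rule~(B\ref{rule:P3}), with $v$ the middle one, so $A=\{u,v,w\}$. Since $j\ge 2$, Lemma~\ref{lem:Nnext} gives $|\Nnext{x}|\le 1$, so $\Nnext{x}=\{x'\}$ with $x'\in B_3$, and then Lemma~\ref{lem:Nsame} forces $\Nsame{x}=\emptyset$. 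I will freely use two elementary facts: a vertex of $B_i$ with $i\ge 1$ has no neighbour in $A\cup C\cup D$ (from the definitions of $B$, $C$ and from Lemma~\ref{lem:D}), and consequently, via the layering $A=B_{-1},B_0,B_1,\dots$, every neighbour of a vertex of $B_i$ lies in $B_{i-1}\cup B_i\cup B_{i+1}$. In particular $x'$ is non-adjacent to every vertex of $\Nprev{x}\subseteq B_1$, no vertex of $\Nprev{x}$ has a neighbour in $A$, and $\Nnext{y}=\{x\}$ for each $y\in\Nprev{x}$ (by Lemma~\ref{lem:Nnext}).

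I would first analyse the backbone reaching $x$. Fix $y\in\Nprev{x}$ (nonempty by definition). Since $(y,x)\in E_1$ and $j\ge 2$, $|F_{(y,x)}|\le 2$; as $(x,x')\in F_{(y,x)}$ and $(y,z)\in F_{(y,x)}$ for every $z\in\Nprev{y}\subseteq B$, this forces $\Nprev{y}=\{z\}$ for a single $z$. Applying the same idea to $(z,y)\in E_0$ (now $(y,x)\in F_{(z,y)}$ and $(z,a)\in F_{(z,y)}$ for every $a\in\Nprev{z}\subseteq A$) gives $\Nprev{z}=\{a\}$ for a single $a\in A$. If $a\in\{u,w\}$, say $a=u$, then $w,v,u,z,y,x,x'$ is an induced $P_7$ — every required non-adjacency follows from the layering together with the fact that $u$ is $z$'s only neighbour in $A$ — contradicting Rule~(B\ref{rule:path}); hence $a=v$.

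Then I would derive the two conclusions. If $N(x')\ne\{x\}$, pick $p\in N(x')\setminus\{x\}$; since $x'\in B_3$ we get $p\in B_2\cup B_3\cup B_4$, and using $\Nsame{x}=\emptyset$, $\Nnext{x}=\{x'\}$ (so $p\ne x'$) and $\Nnext{y}=\{x\}$ one checks $p\not\sim x$ and $p\not\sim y$, while the remaining non-adjacencies in the sequence $w,v,z,y,x,x',p$ follow from the layering; thus $w,v,z,y,x,x',p$ is an induced $P_7$, a contradiction, so $N(x')=\{x\}$. For the bound, suppose $|\Nprev{x}|\ge 2$ and take distinct $y_1,y_2\in\Nprev{x}$ with associated vertices $z_1,z_2$, each with $\Nprev{z_i}=\{v\}$. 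First $y_1\sim y_2$: otherwise $\{y_1,x,y_2\}$ induces a $P_3$, so $F_{(y_1,x)}$ contains $(x,x')$, $(x,y_2)$ and $(y_1,z_1)$, giving $|F_{(y_1,x)}|\ge 3$, impossible since $(y_1,x)\in E_1$ and $j\ge 2$. Now look at $(v,z_1)$: Rule~(B\ref{rule:14}) gives $|F_{(v,z_1)}|\le 3$, yet $F_{(v,z_1)}$ already contains $(v,u)$, $(v,w)$, $(z_1,y_1)$. If $z_1=z_2$ it also contains $(z_1,y_2)$, and if $z_1\ne z_2$ but $z_1\not\sim z_2$ it also contains $(v,z_2)$; either way $|F_{(v,z_1)}|\ge 4$, a contradiction. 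In the last remaining case $z_1\ne z_2$ and $z_1\sim z_2$, the set $\{y_1,z_1,z_2,y_2\}$ induces a $C_4$ (the non-edges $y_1z_2$, $z_1y_2$ hold because $\Nprev{y_i}=\{z_i\}$), contradicting Rule~(B\ref{rule:c4}). Hence $|\Nprev{x}|\le 2$ (in fact $|\Nprev{x}|=1$).

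The main obstacle I anticipate is the bookkeeping in the backbone step: one must prove the singleton facts $|\Nprev{y}|=|\Nprev{z}|=1$ before the seven-vertex paths can be used (these are exactly what forbid a chord from $z$ to $v$ or $w$), and then verify that both $P_7$'s are genuinely chordless using only the layer structure and the known one-element neighbourhoods. The other point worth isolating is that the winning inequality in the last step comes from applying Rule~(B\ref{rule:14}) to the edge $(v,z_1)$, which lies between $A$ and $B_0$ and therefore escapes the definition of $j$; the bound $|F_{(v,z_1)}|\le 3$ available there is precisely what is needed.
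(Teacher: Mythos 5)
Your proof is correct in the setting where the lemma is actually invoked in Section 3 (Rules~(B\ref{rule:path})--(B\ref{rule:c4}) all inapplicable), and every individual step checks out, but it takes a genuinely different route from the one the paper points to: the paper omits the proof and says it is ``similar to the proof of Lemma~\ref{lem:Nsame}'', i.e.\ a pure count of the sets $F_e$ along the chain $x'-x-y-z-a$. That counting route does reach the first assertion: your own first two steps give $\Nprev{y}=\{z\}$ and $\Nprev{z}=\{a\}$; the same count on $(z,y)\in E_0$ additionally forces every other vertex of $\Nprev{x}$ to be adjacent to $z$; and then $F_{(a,z)}$ contains $(a,a')$ for each $a'\in N(a)\cap A$ together with $(z,y'')$ for each $y''\in\Nprev{x}$, so Rule~(B\ref{rule:14}) alone gives $|\Nprev{x}|\le 2$ with no appeal to induced $P_7$'s or $C_4$'s. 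You instead invoke Rule~(B\ref{rule:path}) (two seven-vertex induced paths) and Rule~(B\ref{rule:c4}) (the $z_1\sim z_2$ case), which buys you the stronger bound $|\Nprev{x}|\le 1$ and a clean proof of $N(x')=\{x\}$ --- for which, to be fair, I also do not see a counting-only argument. The trade-off worth flagging is that these extra hypotheses are not recorded in the lemma statement: in Section 4 the lemma is reused on graphs where Rule~(B\ref{rule:c4}) has \emph{not} been exhausted (only $C$ being a clique is guaranteed), so the induced-$C_4$ case of your $|\Nprev{x}|\ge 2$ analysis would not be available there, whereas the counting argument survives; your use of Rule~(B\ref{rule:path}) is unproblematic since that rule is exhausted before Rule~(B\ref{rule:P3}) in both algorithms.
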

The proofs of Lemma~\ref{lem:Nprev} and Lemma~\ref{lem:Nprev-2} are similar to
the proof of Lemma~\ref{lem:Nsame} and were thus omitted.

From the lemmas above we obtain the following corollaries.
\begin{corollary}\label{cor:Ei}
$|E_j| \leq |E_{j-1}| \leq \cdots \leq|E_0|$.
\end{corollary}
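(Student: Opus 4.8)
The plan is to establish each step $|E_i| \le |E_{i-1}|$ of the chain separately, for $1 \le i \le j$; the degenerate case $j = 0$ makes the asserted chain consist of the single term $|E_0|$ and is immediate, and when $j = \infty$ the statement is read as the infinite non-increasing chain $\cdots \le |E_1| \le |E_0|$, to which the same argument applies unchanged. The key observation is that both $|E_{i-1}|$ and $|E_i|$ can be rewritten as sums of local quantities over the vertices of $B_i$: every edge of $E_{i-1}$ joins a vertex of $B_{i-1}$ to a vertex of $B_i$ and thus has exactly one endpoint in $B_i$, so summing $|\Nprev{x}|$ over $x \in B_i$ counts each such edge exactly once, giving $|E_{i-1}| = \sum_{x \in B_i} |\Nprev{x}|$; symmetrically $|E_i| = \sum_{x \in B_i} |\Nnext{x}|$ since $B_i$ and $B_{i+1}$ are disjoint.

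With this reformulation it suffices to bound the two sums. First I would invoke Lemma~\ref{lem:Nnext}: assuming $j \ge 1$, it gives $|\Nnext{x}| \le 1$ for every $x \in B_1 \cup \cdots \cup B_j$, hence $\sum_{x \in B_i} |\Nnext{x}| \le |B_i|$. Second, I would use the fact noted after the definition of the sets $B_i$ that $\Nprev{x} \neq \emptyset$ for every $x \in \bigcup_{i \ge 0} B_i$, which gives $|\Nprev{x}| \ge 1$ and therefore $\sum_{x \in B_i} |\Nprev{x}| \ge |B_i|$. Combining the two estimates yields $|E_i| = \sum_{x \in B_i} |\Nnext{x}| \le |B_i| \le \sum_{x \in B_i} |\Nprev{x}| = |E_{i-1}|$, and chaining these inequalities for $i = 1, \ldots, j$ proves the corollary.

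There is essentially no hard part here; the only thing to be careful about is the bookkeeping of index ranges — namely that Lemma~\ref{lem:Nnext} already has $j \ge 1$ as a hypothesis, so that the $j = 0$ case genuinely needs the separate (trivial) treatment above, and that the ``previous-neighbour'' nonemptiness must be applied to vertices of $B_i$ with $i \ge 1$, where $B_{i-1}$ is a well-defined set among $B, B_1, B_2, \ldots$.
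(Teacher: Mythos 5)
Your proof is correct and matches the derivation the paper intends (the paper leaves this corollary unproved, stating only that it follows from the preceding lemmas): the double count $|E_i|=\sum_{x\in B_i}|\Nnext{x}|\leq |B_i|\leq\sum_{x\in B_i}|\Nprev{x}|=|E_{i-1}|$, using Lemma~\ref{lem:Nnext} and the nonemptiness of $\Nprev{x}$, is exactly the intended argument. Your handling of the index ranges and the degenerate cases $j=0$ and $j=\infty$ is also fine.
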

\begin{corollary}\label{cor:paths}
If $j \geq 2$, $G[B_2 \cup \cdots \cup B_j]$ is a collection of at most
$|E_1|$ disjoint paths.
\end{corollary}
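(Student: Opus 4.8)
The plan is to prove that $H := G[B_2\cup\cdots\cup B_j]$ is a disjoint union of paths, by first showing that $H$ has maximum degree at most $2$ and then ruling out cycles, and afterwards to bound the number of these paths through the chain $(\text{number of paths})\le|B_2|\le|E_1|$. Throughout we use the fact (implicit in the definition of the layers $B_\ell$, and in Lemma~\ref{lem:D}) that every neighbour of a vertex $x\in B_i$ with $i\ge1$ lies in $B_{i-1}\cup B_i\cup B_{i+1}$, so that the neighbours of $x$ inside $H$ are among $\Nprev{x}\cup\Nsame{x}\cup\Nnext{x}$.

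First I would bound the degree in $H$ of a vertex $x\in B_i$ with $2\le i\le j$. If $i=2$, then $\Nprev{x}\subseteq B_1$, which is disjoint from $V(H)$, and Lemmas~\ref{lem:Nnext} and~\ref{lem:Nsame} give $|\Nnext{x}|\le1$, $|\Nsame{x}|\le1$, and $\Nnext{x}=\emptyset$ whenever $\Nsame{x}\neq\emptyset$; hence $x$ has at most one neighbour in $H$. If $3\le i\le j$, then either $|\Nprev{x}|=2$, in which case Lemma~\ref{lem:Nprev} forces $\Nsame{x}=\Nnext{x}=\emptyset$, or $|\Nprev{x}|\le1$, in which case Lemmas~\ref{lem:Nnext} and~\ref{lem:Nsame} give $|\Nsame{x}|+|\Nnext{x}|\le1$; either way $x$ has at most $2$ neighbours in $H$. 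So every connected component of $H$ is a path or a cycle.

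Next I would show that every connected component $P$ of $H$ contains a vertex of $B_2$. Let $x\in V(P)$ lie in the lowest layer $B_i$ that $P$ meets. If $i\ge3$, pick any $y\in\Nprev{x}$ (nonempty since $x\in\bigcup_{\ell\ge0}B_\ell$); then $y\in B_{i-1}\subseteq V(H)$ and $xy\in E(H)$, so $y\in V(P)$, contradicting the minimality of $i$. Hence $i=2$. Combining this with the first step, a vertex of $B_2$ has degree at most $1$ in $H$ and so cannot lie on a cycle; since every component of $H$ contains such a vertex, no component is a cycle, i.e.\ $H$ is a collection of disjoint paths. For the count, choosing one vertex of $B_2$ in each component yields an injection from the components of $H$ into $B_2$, so there are at most $|B_2|$ paths; and since every $x\in B_2$ has a neighbour in $B_1$ while every edge of $E_1$ has exactly one endpoint in $B_2$, we get $|B_2|\le|E_1|$, which finishes the argument.

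I expect the only delicate points to be the treatment of the boundary layer $B_2$ — where $\Nprev{x}$ must be discarded from the $H$-neighbourhood, and where one must verify that $B_2$-vertices really have $H$-degree at most $1$ — together with the observation that cycles are excluded precisely because every path component must descend to this degree-$\le1$ layer. The rest is routine bookkeeping with the neighbourhood lemmas.
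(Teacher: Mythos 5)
Your proof is correct and is essentially the argument the paper intends: the paper states the corollary without proof as a direct consequence of Lemmas~\ref{lem:Nnext}--\ref{lem:Nprev}, and your derivation (degree at most~2 from the neighbourhood lemmas, every component descends to a degree-$\le 1$ vertex in $B_2$ which excludes cycles, and the injection from components into $B_2$ into $E_1$) is exactly the intended chain of reasoning, carefully spelled out.
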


When we consider a subgraph $G'$ of $G$, we use $G'$ in  superscript to refer
to a set (or an integer) defined for $G'$.
For example, the set of all vertices $b$ such that $b \notin A$ and $b$ is
adjacent in $G'$ to one or two vertices in $A$ is denoted $B^{G'}$.
%We also use $j^{G'}$ to denote the minimum index such that there is an edge
%$e \in E_{j^{G'}}^{G'}$ for which $|F^{G'}_e| \geq 3$.

We now describe Rule~(B\ref{rule:P3}).
Recall that the algorithm first finds an induced $P_3$ $u,v,w$.
The main idea is to disconnect $u,v,w$ from the rest of the graph.
If $j \neq \infty$, the algorithm picks an arbitrary edge $e \in E_j$ such that
$|F_e| \geq 3$.
Then, the algorithm branches on $\{e\}$ and $F_e$.
Namely, the algorithm builds two instances
$(G_1,k_1) = (G-\{e\},k-1)$ and $(G_2,k_2) = (G-F_e,k-|F_e|)$.
This process is repeated on each of these two instances:
Let $(G_i,k_i)$ be one of the two instances.
If $j^{G_i} \neq \emptyset$, the algorithm picks an arbitrary edge
$e_i \in E_{j^{G_i}}^{G_i}$ such that $|F_{e_i}^{G_i}| \geq 3$ and creates
two new instances: $(G_{i1},k_{i1}) = (G_i-\{e_i\},k_i-1)$ and
$(G_{i2},k_{i2}) = (G_i-F_{e_i}^{G_i},k_i-|F_{e_i}^{G_i}|)$.
This is repeated until every instance $(G_{i_1\cdots i_p},k_{i_1\cdots i_p})$
satisfies $j^{G_{i_1\cdots i_p}} = \infty$.
An instance $(G_{i_1\cdots i_p},k_{i_1\cdots i_p})$ with
$j^{G_{i_1\cdots i_p}} = \infty$ generated by Rule~(B\ref{rule:P3}) will be
called a \emph{first stage instance}.

Next, on each first stage instance
$(G',k') = (G_{i_1\cdots i_p},k_{i_1\cdots i_p})$,
the algorithm repeatedly applies Rule~(B\ref{rule:path}) as follows.
Since $j^{G'} = \infty$, we have by Corollary~\ref{cor:paths} that
$G'[\bigcup_{i\geq 2} B_i^{G'}]$ is a collection of at most $|E_1^{G'}|$
disjoint paths.
On each path of these paths that has at least 7 vertices,
the algorithm applies Rule~(B\ref{rule:path}).
If there are $q$ such paths, this generates $2^q$ instances from $(G',k')$.
These instances will be called \emph{second stage instances}.

Let $(G'',k'')$ be a second stage instance and let $(G',k')$ be the first
stage instance from which $(G'',k'')$ was generated.
Note that $G''[\bigcup_{i\geq 2} B_i^{G''}]$ is a collection of at most
$|E_1^{G''}| = |E_1^{G'}|$ disjoint paths, where each path contains at most 6
vertices.
Let $H$ be the connected component of $v$ in $G''$.
We will later show that $G''$ is an $O(1)$-almost clique.
The algorithm uses the algorithm of Lemma~\ref{lem:almost-clique} to find
a set $S^{G''} \subseteq E(H)$ of minimum size such that $H-S^{G''}$ is a
cluster graph.
Then, the algorithm makes a recursive call on $(G''-S^{G''},k''-|S^{G''}|)$.
Note that the size of $S$ is at least 1 since $u,v,w$ is an induced $P_3$
in $H$.
This is repeated for every second stage instance.
The instances $(G''-S^{G''},k''-|S^{G''}|)$ will be called
\emph{third stage instances}.

We now show that $H$ is an $O(1)$-almost clique.
By Lemma~\ref{lem:D}, $V(H) = A \cup C \cup D \cup \bigcup_{i\geq 0} B_i^{G''}$.
We have that $|A| = 3$ and $|D| \leq 3$ (Lemma~\ref{lem:D}).
Additionally, $B_0^{G''} \cup B_1^{G''} \subseteq B_0 \cup B_1$
and therefore $|B_0^{G''}|+|B_1^{G''}| \leq |B_0|+|B_1| \leq 12$.
Moreover, $|\bigcup_{i\geq 2} B_i^{G''}| \leq 6|E_1^{G''}| = 6|E_1^{G'}|$ and
$|E_1^{G'}| = O(1)$ (since every edge in $E_1^{G'}$ is an edge in
$G[B\cup B_1]$).
Therefore, by Lemma~\ref{lem:C}, $H$ is an $O(1)$-almost clique.

We note that in the analysis of Rule~(B\ref{rule:P3}) in~\cite{bocker2011even},
it is claimed that if $(G_{i_1\cdots i_p},k_{i_1\cdots i_p})$
is a first stage instance then $p \leq 4$.
However, this is not true.
Suppose that $B = \{b_1,b_2,b_3,b_4\}$, $B_1 = \{c_1,c_2,c_3,c_4\}$,
$B_2 = \{d_1,\ldots,d_6\}$,
$N(b_1) = \{u,b_2,c_1,c_2\}$,
$N(b_2) = \{u,b_1,c_1,c_2\}$,
$N(b_3) = \{w,b_4,c_3,c_4\}$,
$N(b_4) = \{w,b_3,c_4,c_4\}$,
$N(c_1) = \{b_1,b_2,c_2,d_1,d_2\}$,
$N(c_2) = \{b_1,b_2,c_1,d_3\}$,
$N(c_3) = \{b_3,b_4,c_4,d_4,d_5\}$, and
$N(c_4) = \{b_3,b_4,c_3,d_6\}$.
In this graph, Rule~(B\ref{rule:P3}) generates an instance
$(G_{11111111},k_{11111111})$ by deleting one by one all 8 edges between
$B$ and $B_1$.

We now bound the branching number of Rule~(B\ref{rule:P3}).
We consider several cases.
In the first case, suppose that $j \geq 2$ and $j \neq \infty$.
Recall that the algorithm picks an edge $e = (x,y) \in E_j$ and generates
the instances $(G_1,k_1) = (G-\{e\},k-1)$ and $(G_2,k_2) = (G-F_e,k-|F_e|)$.
Clearly, $|E_j^{G_1}| = |E_j|-1$.
We now show that $|E_j^{G_2}| \leq |E_j|-1$.
Since $G_2$ is obtained from $G$ by erasing the edges in $F_e$, and all the
edges in $E_{j-1}$ that are incident on $x$ are in $F_e$, we have that
$x \notin B_j^{G_2}$.
Therefore, $(x,y) \notin E_j^{G_2}$.
We now claim that $E_j^{G_2}$ does not contain an edge that is not in $E_j$.
Suppose conversely that $e'$ is such an edge.
Then $x$ must be in $B_{j+1}^{G_2}$ and $e' = (x',x)$ for some
$x' \in B_j^{G_2}$.
Therefore, in $G$ we have $x' \in \Nsame{x}$.
This contradicts Lemma~\ref{lem:Nsame}.
Thus, $E_j^{G_2}$ does not contain edges that are not in $E_j$.
It follows that $|E_j^{G_2}| \leq |E_j|-1$.
By Corollary~\ref{cor:Ei}, $|E_{j^{G_i}}^{G_i}| \leq |E_j^{G_i}| \leq |E_j|-1$
for $i=1,2$.
Using the same arguments, for every first stage instance
$(G',k') = (G_{i_1\cdots i_p},k_{i_1 \cdots i_p})$ we have that
$|E_{j^{G'}}^{G'}| \leq |E_j|-p$.
This implies that $p \leq |E_j|$.
Therefore, the number of first stage instances generated by
Rule~(B\ref{rule:P3}) is at most $2^{|E_j|}$.

Suppose for example that $|E_j| = 2$.
In the worst case, Rule~(B\ref{rule:P3}) generates four first stage instances:
$(G_{11},k_{11})$, $(G_{12},k_{12})$, $(G_{21},k_{21})$, and $(G_{22},k_{22})$.
Additionally, $k_{11} = k-1$, $k_{12} = k_{21} = k-4$, and $k_{22} = k-6$.
Suppose that Rule~(B\ref{rule:path}) is not applied on any of the four instances
above.
Then, the algorithm generates a third stage instance
$(G_{i_1i_2}-S^{G_{i_1i_2}},k_{i_1i_2}-|S^{G_{i_1i_2}}|)$ from each
first stage instance $(G_{i_1i_2},k_{i_1i_2})$.
Since $|S^{G_{i_1i_2}}| \geq 1$ for every $i_1,i_2$,
we obtain that
$k_{11}-S^{G_{11}} \leq k-3$,
$k_{12}-S^{G_{12}} \leq k-5$,
$k_{21}-S^{G_{21}} \leq k-5$,
$k_{22}-S^{G_{22}} \leq k-7$.
In other words, the branching vector of Rule~(B\ref{rule:P3}) in this case is
at least $(3,5,5,7)$.

To analyze the case in which Rule~(B\ref{rule:path}) is applied (at least once)
on at least one of the four first stage instances
note that for the sake of the analysis, we can assume for the sake of the
analysis that the applications of Rule~(B\ref{rule:path}) are done after the
application of Rule~(B\ref{rule:P3}).
For example, suppose that Rule~(B\ref{rule:path}) is applied only on the
instance $(G_{11},k_{11})$ and it is applied once on this instance,
generating instances $(G'_{11},k'_{11})$ and $(G''_{11},k''_{11})$, where
$k'_{11} = k''_{11} = k_{11} - 3$.
Therefore, the branching vector of Rule~(B\ref{rule:P3}) in this case is at
least $(6,6,5,5,7)$.
However, we can assume for the analysis that Rule~(B\ref{rule:P3}) generates
only four third stage instances, namely the instances
$(G_{i_1i_2}-S^{G_{i_1i_2}},k_{i_1i_2}-|S^{G_{i_1i_2}}|)$,
and then the algorithm applies Rule~(B\ref{rule:path}) on
$(G_{11}-S^{G_{11}},k_{11}-|S^{G_{11}}|)$.
The branching vectors for these two rules are $(3,5,5,7)$ and $(3,3)$,
respectively.

For a general value of $|E_j|$, we have that in the worst case
Rule~(B\ref{rule:P3}) generates $2^{|E_j|}$ third stage instances
(as discussed above, we can assume that Rule~(B\ref{rule:path}) was not applied
on the first stage instances).
The branching vector is at least $R(|E_j|)$, where
$R(p)$ is a vector of length $2^p$ in which the value $p+1+2i$ appears
$\binom{p}{i}$ times, for $i = 0,\ldots,p$.
We note that this bound is not good enough for our purpose.
Even for $|E_j| = 5$, we have that the branching vector $R(5)$ has branching
number of approximately 1.406 and we need a branching number less than 1.404.
The solution is to give better bounds on the sizes of the sets $S_{G''}$.
This will be discussed below.

Now consider the case $j = 1$.
In this case $|E_1^{G_1}| = |E_1|-1$ as in the first case.
However, we now can have $|E_1^{G_2}| \geq |E_1|$.
%(note that in this case, Lemma~\ref{lem:Nsame} cannot be used).
This occurs if $\Nsame{x} \cap \Nprev{y} \neq \emptyset$.
In this case, $x$ belongs to $B^{G_2}_{2}$, and for every
$x' \in \Nsame{x} \cap \Nprev{y}$, the edge $(x,x')$ is in $E_1^{G_2}$ and not
in $E_j$.
Note that in this case we have that $x$ and $y$ are not adjacent in $G_2$ to
vertices in $B_2^{G_2} \cup B_3^{G_2}$ (If $z$ is adjacent in $G_2$ to $x$ or
$y$ then $z$ must be adjacent to both $x$ and $y$, otherwise the edge
between $z$ and $x$ or $y$ is in $F_{(x,y)}$.
$z$ is also adjacent to $x$ and $y$ in $G$.
Therefore, $z$ is in $B_1$ or $B_2$.
By Lemma~\ref{lem:Nnext}, $z \notin B_2$, so $z\in B_1$).
Therefore, we can ignore the edges $(x',x)$ and $(x',y)$.
Formally, if the case above occurs, we \emph{mark} the edges
$(x',x)$ and $(x',y)$ for every $x' \in \Nsame{x} \cap \Nprev{y}$.
We now change the definition of $E_1$ to include all the unmarked edges
with one endpoint in $B_1$ and the other endpoint in $B_2$.
For this new definition we have $|E_1^{G_2}| \leq |E_1|-1$.
Additionally, Corollary~\ref{cor:Ei} remains true.
Therefore, the analysis of the case $j=1$ is the same as the analysis 
for the case $j \geq 2$.
That is, for a specific value of $|E_1|$, the worst branching vector
is at least $R(|E_1|)$.

We now consider the case $j=0$.
To handle this case
(and to get a better bound on the branching number for the case $j\geq 1$),
we use a Python script.
The script goes over possible cases for the graph
$G[A\cup B\cup B_1 \cup B_2]$ and for each case it computes a branching vector
that gives an upper bound on the branching number for this case.
Formally,
a \emph{configuration graph} is a graph $J$ whose vertices are
partitioned into 4 sets:
(1) A set $A^J$ that contains 3 vertices $u',v',w'$ which form a $P_3$.
(2) A set $B^J$ that contains vertices that are adjacent to 1 or 2 vertices of
$A^J$.
(3) A set $B_1^J$ that contains vertices that are adjacent to at least one
vertex in $B^J$ and are not adjacent to vertices in $A^J$.
(4) A set $B_2^J$ such that every vertex in $B_2^J$ has exactly one neighbor
and this neighbor is in $B_1^J$.
Additionally, $|F_e^J| \leq 3$ for every edge $e$ with at least one endpoint in
$A^J \cup B^J$.
We note that the restriction on the degree of the vertices in $B_2^J$ is
required in order to restrict the number of configuration graphs.
% Rule~(B\ref{rule:14}) cannot be applied on $J$.

Let $G$ be a graph with an induced $P_3$ $u,v,w$.
We say that $G$ \emph{matches} a configuration graph $J$ if there is a
bijection $\phi \colon A \cup B \cup B_1 \to A^J \cup B^J \cup B_1^J$ such that
(1) $\phi$ is an isomorphism between $G[A\cup B \cup B_1]$ and
$J[A^J\cup B^J\cup B_1^J]$.
(2) $\phi$ maps $u,v,w$ to $u',v',w'$, respectively.
(3) For a vertex $c \in B_1$, the number of neighbors of $c$ in $B_2$ is equal
to the number of neighbors of $\phi(c)$ in $B_2^J$.

The script goes over all possible configuration graphs. For each configuration
graph $J$, the script builds a vector whose branching number is an upper bound
on the branching number of Rule~(B\ref{rule:P3}) when it is applied on a graph
$G$  which matches $J$.

For each configuration graph $J$ the script generates a branching vector $R$
as follows.
Suppose that $G$ is a graph that matches $J$.
The script generates graphs of the form $J_{i_1\cdots i_p}$ like the generation
of first stage instances in Rule~(B\ref{rule:P3}),
except that now the process ends when $j^{J_{i_1\cdots i_p}} > 0$.
A graph $J_{i_1\cdots i_p}$ with $j^{J_{i_1\cdots i_p}} > 0$ generated by the
script will be called a \emph{first stage configuration graph}.
Consider some first stage configuration graph $J' = J_{i_1\cdots i_p}$.
This graph corresponds to the instance
$(G',k') = (G_{i_1\cdots i_p},k_{i_1\cdots i_p})$ that Rule~(B\ref{rule:P3})
generates when it is applied on the graph $G$ and the induced path $u,v,w$.
Note that if $j^{J'} \neq \infty$ then $(G',k')$ is not a first stage instance
and Rule~(B\ref{rule:P3}) will continue generating instances from $(G',k')$.
By the analysis of the case $j \geq 1$ above,
at the worst case, the number of first stage instances generated from $(G',k')$ 
is $2^{|E_1^{G'}|} = 2^{|E_1^{J'}|}$.
The vector $R(|E_1^{J'}|)$ gives a lower bound on the differences between
the parameter $k_{i_1\cdots i_p}$ and the parameters of these
$2^{|E_1^{G'}|}$ instances.
Therefore, the vector $R$ can be the concatenation of 
$R(|E_1^{J_{i_1\cdots i_p}}|)$ for every first stage graph $J_{i_1\cdots i_p}$.
We can get a better branching vector by giving a better bound on
$|S^{G''}|$ for the graphs $G''$ of the second stage instances $(G'',k'')$ that
are generated by Rule~(B\ref{rule:P3}) from an instance
$(G',k') = (G_{i_1\cdots i_p},k_{i_1\cdots i_p})$, where $G'$ matches
the first stage configuration graph $J' = J_{i_1\cdots i_p}$.
For our purpose, we it was suffices to give a simple bound based on the
edges between $A^{J'}$ and $B^{J'}$.
For example, suppose that there are vertices $b'_1,b'_2 \in B^{J'}$ such that
$N(b'_1) \cap A^{J'} = \{u'\}$ and $N(b'_2) \cap A^{J'} = \{w'\}$.
Then, in $G$ there are vertices $b_1,b_2 \in B$ such that
$N(b_1) \cap A = \{u\}$ and $N(b_2) \cap A = \{w\}$.
The edges $(b_1,u)$ and $(b_2,w)$ remain in every graph $G''$ of a second stage
instance $(G'',k'')$ generated from $(G',k')$.
Therefore, each such graph contains two edge disjoint induced $P_3$s
($b_1,u,v$ and $b_2,w,v$).
Therefore, $|S^{G''}| \geq 2$.
We use similar lower bounds in case of other edges between $A^{J'}$ and
$B^{J'}$.

%The branching vector with largest branching number generated by the script is
%$(7, 9, 9, 11, 9, 11, 11, 13, 9, 11, 11, 13, 11, 13, 13, 15, 9, 11,
%11, 13, 11, 13,13, 15, 10, 12, 12,\allowbreak 14,\allowbreak 12, 14, 14, 16)$
%and its branching number is less than 1.367.
All the branching vectors generated by the script have branching numbers less
than 1.393.
Therefore, the branching number of Rule~(B\ref{rule:P3}) is less then 1.393.

Since all the branching rules of the algorithm have branching numbers less than
1.415, it follows that the running time of the algorithm is $O^*(1.415^k)$.

\section{New algorithm}

Our algorithm first applies Rule~(B\ref{rule:path}) and Rule~(B\ref{rule:14})
until these rule cannot be applied
(note that Rule~(B\ref{rule:c4}) is not applied).
Let $G$ be a graph in which these rules cannot be applied.
If $G$ does not contain induced $C_4$ then
the algorithm applies Rule~(B\ref{rule:P3}).
Otherwise, the algorithm applies a new branching rule,
denoted (B\ref{rule:P3new}), which is as follows.
The algorithm picks an induced $C_4$ $u,v,w,u',u$.
Note that $u,v,w$ is an induced $P_3$.
Let $B_u$ and $C$ be the sets defined in the previous section.
We have that $|B_u| \geq 1$ since $u' \in B_u$.
If $C$ is a clique then the algorithm applies Rule~(B\ref{rule:P3}) on $u,v,w$.
Otherwise, let $x,y \in C$ be two non-adjacent vertices.
Note that $u,x,w,y,u$ is an induced $C_4$.
The algorithm applies Rule~(B\ref{rule:c4}) on the cycle $u,x,w,y,u$.
This generates two instances:
$(G_1,k_1) = (G-\{(u,x),(w,y)\},k-2)$ and
$(G_2,k_2) = (G-\{(x,w),(y,u)\},k-2)$.

Consider an instance $(G_i,k_i)$ and note that $u,v,w$ is an induced $P_3$ in
$G_i$.
If Rule~(B\ref{rule:14}) is applicable on $(G_i,k_i)$, the algorithm
applies Rule~(B\ref{rule:14}).
Now suppose that Rule~(B\ref{rule:14}) is not applicable.
In $G_i$, the vertex $x$ is adjacent to $v$ and not adjacent to $u$.
Therefore, $x \in B_u^{G_i}$.
We also have $u' \in B_u^{G_i}$, so by Lemma~\ref{lem:Bu} we have that
$B_u^{G_i} = \{u',x\}$.
Additionally, $x,y \notin C^{G_i}$.
If $C^{G_i}$ is a clique then the algorithm applies Rule~(B\ref{rule:P3})
on the graph $G_i$ and the path $u,v,w$.
Now suppose that $C^{G_i}$ is not a clique.
let $x_i,y_i \in C^{G_i}$ be two non-adjacent vertices.
We have that $u,x_i,w,y_i,u$ is an induced $C_4$ in $G_i$.
The algorithm applies Rule~(B\ref{rule:c4}) on the cycle $u,x_i,w,y_i,u$.
This generates two instances:
$(G_{i1},k_{i1}) = (G_i-\{(u,x_i),(w,y_i)\},k_i-2)$ and
$(G_{i2},k_{i2}) = (G_i-\{(x_i,w),(y_i,u)\},k_i-2)$.
Consider the instance $G_{i1}$.
Again, we have that $u,v,w$ is an induced $P_3$ in $G_{i1}$.
We now have that $B_u^{G_{i1}} = \{u',x,x_i\}$.
By Lemma~\ref{lem:Bu}, Rule~(B\ref{rule:14}) is applicable on $(G_{i1},k_{i1})$.
Using the same arguments, Rule~(B\ref{rule:14}) is applicable on
$(G_{i2},k_{i2})$.
Thus, the algorithm applies Rule~(B\ref{rule:14}) on $(G_{i1},k_{i1})$
and on $(G_{i2},k_{i2})$.

We now analyze the branching number of Rule~(B\ref{rule:P3new}).
There are three cases that we need to consider.
In the first case, the algorithm generates four instances
$G_{11}$, $G_{12}$, $G_{21}$, and $G_{22}$, and then applies
Rule~(B\ref{rule:14}) on each of these instances.
Therefore, the branching vector in this case is $(5,8,5,8,5,8,5,8)$
and the branching number is less than 1.404.
In the second case, the algorithm generates, without loss of generality,
the instances $G_{11}$, $G_{12}$, and $G_2$.
The algorithm then applies Rule~(B\ref{rule:14}) on $G_{11}$ and $G_{12}$, and
applies Rule~(B\ref{rule:14}) or Rule~(B\ref{rule:P3}) on $G_2$.
The worst case is when Rule~(B\ref{rule:14}) is applied on $G_2$.
The branching vector in this case is $(5,8,5,8,3,6)$,
and the branching number is less than 1.402.
In the third case, the algorithm generates the instances $G_1$ and $G_2$ and
applies Rule~(B\ref{rule:14}) or Rule~(B\ref{rule:P3}) on each of these
instances.
The worst branching vector in this case is $(3,6,3,6)$ and the branching number
is less than 1.398.
Therefore, the branching number of Rule~(B\ref{rule:P3new}) is less than 1.404.

Since all the branching rules of the algorithm have branching numbers less than
1.404, it follows that the running time of the algorithm is $O^*(1.404^k)$.

\bibliographystyle{abbrv}
\bibliography{cluster}

\end{document}